\newcommand{\Z}{\mathbb{Z}}
\newcommand{\cur}{\text{cur}}
\newcommand{\Ssk}{S_{\text{sk}}}
\newcommand{\Ssy}{S_{\text{sy}}}
\newcommand{\Squ}{S_{\text{q}}}
\newtheorem{definition}{Definition}
\newtheorem{theorem}{Theorem}
\newtheorem{corollary}[theorem]{Corollary}
\newtheorem{lemma}[theorem]{Lemma}
\DeclareMathOperator{\PSD}{PSD}
\DeclareMathOperator{\row}{sum}
\DeclareMathOperator{\Aut}{Aut}
\DeclareMathOperator{\PAF}{PAF}
\DeclarePairedDelimiter{\abs}{\lvert}{\rvert}
\DeclarePairedDelimiter{\floor}{\lfloor}{\rfloor}
\renewcommand{\@BIBLABEL}{\@emptybiblabel}
\newcommand{\@emptybiblabel}[1]{}
\begin{document}
\title{A SAT+CAS Approach to Finding Good Matrices: \\New Examples and Counterexamples}
\author{Curtis Bright \\ University of Waterloo \\
\And Dragomir \v{Z}.\ \DJ okovi\'c \\ University of Waterloo \\
\And Ilias Kotsireas \\ Wilfrid Laurier University \\
\And Vijay Ganesh \\ University of Waterloo}
\maketitle
\begin{abstract}
We enumerate all circulant good matrices with odd orders divisible by $3$ up to order~$70$.
As a consequence of this we find a previously
overlooked set of good matrices of order~$27$ and
a new set of good matrices of order $57$.
We also find that circulant good matrices do not exist in
the orders $51$, $63$, and~$69$, thereby
finding three new counterexamples to the conjecture that such matrices
exist in all odd orders.
Additionally, we prove a new relationship
between the entries of good matrices and exploit this relationship in
our enumeration algorithm.  Our method applies the SAT+CAS paradigm
of combining computer algebra functionality with modern SAT solvers to efficiently search
large spaces which are specified by both algebraic and logical constraints.
\end{abstract}

\section{Introduction}

In 2002, circulant ``good'' matrices were searched for in all odd orders $n$ with $n<40$~\cite{georgiou2002good}
and were found to exist in all such orders.
This gave evidence to the conjecture that such matrices actually exist in \emph{all} odd orders.
The mathematician George
Szekeres 
studied circulant good matrices in terms of an equivalent type of object which he termed
$E$-sequences and in his classic paper~\cite{szekeres1988note} he
mentioned this hypothesis as being worthwhile to study using computers:
\begin{quote}
\ldots it is conceivable that $E$-sequences exist for all $n=2m+1$,
$m\geq1$ and it is worth testing this hypothesis at least for those
orders which are accessible to present day computers\ldots
\end{quote}
Unfortunately, this conjecture was recently shown to be false
in~\cite{djokovic2018goethals} where it was shown that
the orders $n=41$, $47$, and $49$ are counterexamples,
i.e., no circulant good matrices of orders $41$, $47$, and $49$ exist.

In this paper we find the additional three larger counterexamples $n=51$, $63$, and $69$.
In addition, we verify the claim of \DJ okovi\'c and Kotsireas
that there are exactly four inequivalent sets of circulant good matrices of order $45$
and determine that there is a single set of inequivalent circulant good matrices of order $57$.
We also find a previously undiscovered set of good matrices of order $27$
demonstrating that Szekeres' claimed exhaustive search~\cite{szekeres1988note} was incomplete.

The matrices that are now known as good matrices were first used in
the PhD thesis of Jennifer Seberry Wallis~\cite{wallis1970combinatorial};
we give their formal definition in Section~\ref{sec:background}.
She gave a construction using good matrices that enabled
the construction of \emph{Hadamard matrices}%
---square matrices with $\pm1$ entries and whose rows are pairwise orthogonal.
In her thesis she gave examples of good matrices in all odd orders up to $15$
as well as order $19$ and shortly later gave an example in order $23$~\cite{wallis1971skew}.
Subsequently, \cite{hunt1972skew} ran a complete search in the odd orders up to $21$
and gave examples in order~$25$,
\cite{szekeres1988note} ran a complete search up to order~$31$,
\cite{djokovic1993good} gave examples in orders~$33$ and~$35$,
\cite{georgiou2002good} ran a complete search up to order~$39$
and most recently \cite{djokovic2018goethals}
ran a complete search up to order~$49$.
The result of these searches have shown that good matrices exist
in all odd orders up to $39$ as well as $43$ and $45$.
Additionally, \cite{djokovic1993good}
gave a construction showing they exist in order $127$.

In this paper we extend the search to all odd orders divisible by $3$
up to order $69$.
Note that increasing $n$ by $2$ increases the size of the search space
(at least to a first approximation)
by a factor of $16$,
meaning that the size of the search space 
in order $69$ is an enormous $2^{40}$
times larger than the search space in order $49$, the largest
order which has previously been exhaustively searched.
There are two primary reasons our approach is able to scale to such 
orders:
\begin{enumerate}
\item We use a number of theoretical mathematical results
to cut down the search space,
including filtering theorems, recently proven compression
theorems, and a new product theorem given in this paper.
\item We use state-of-the-art programmatic SAT solvers,
computer algebra systems, and mathematical libraries to very efficiently
search spaces specified by both logical and algebraic constraints.
\end{enumerate}

This approach of coupling computer algebra systems and SAT solvers
was first proposed in 2015 at the conferences CADE
and ISSAC~\cite{zulkoskimathcheck,abraham2015building} and has since appeared
in papers at the conferences IJCAI, ISSAC,
and the journal of automated
reasoning~\cite{ZulkoskiGC16,bright2018enumeration,ZulkoskiBHKCG17}
and is the aim of the SC$^2$ project~\cite{sc2}.
In particular, the SAT+CAS method in this paper is an adaption
of one presented at AAAI~\cite{bright2018sat+}.

Furthermore, we employ a SAT solver which can learn clauses
\emph{programmatically}, through a piece of code compiled with
the SAT solver and which makes calls to an external mathematical
library.  This approach permits one to efficiently encode constraints
that would be much too cumbersome to encode with native SAT clauses,
as we will see in Section~\ref{sec:sat+cas}.
For example, it is much faster and easier to compute a discrete Fourier
transform using a numerical library or computer algebra system than it
would be to encode a DFT circuit using logical clauses, the format
normally accepted by SAT solvers.

A detailed description of our enumeration algorithm
will be given in Section~\ref{sec:method}, followed by our
results in Section~\ref{sec:results}.  In particular, we
explicitly give two new examples of good matrices,
one in order $27$ which has gone undetected since 1988
and one in order $57$, an order which had previously been
out of reach of exhaustive search algorithms; the largest order
searched prior to our work was $49$.
We also provide a table enumerating the number of
sets of good matrices in all odd orders
divisible by $3$ up to $69$.  These counts are given
up to an equivalence which is described, along with
the required background on good matrices, in
Section~\ref{sec:background}.

\section{Background}\label{sec:background}

In this section we define good matrices and review the
properties and theorems of good matrices that we use
in our enumeration algorithm.

To begin, we recall the
definitions of symmetric, skew, and circulant matrices.
Let $X$ be a square matrix of order~$n$ with entries
$x_{i,j}$ with $0\leq i,j<n$.
We say that $X$ is \emph{symmetric}
if all entries satisfy $x_{i,j}=x_{j,i}$,
that $X$ is \emph{skew}
if diagonal entries are $1$ and non-diagonal entries satisfy $x_{i,j}=-x_{j,i}$,
and that $X$ is \emph{circulant}
if all entries satisfy $x_{i,j}=x_{i-1,j-1}$
with indices taken mod~$n$ as necessary.

\subsection{Good matrices}

\begin{definition}\label{def:good}
Four matrices $A$, $B$, $C$, $D\in\{\pm 1\}^{n\times n}$ are known as \emph{good matrices}
if they have the following properties.
\begin{enumerate}
\item[(a)] They are pairwise amicable ($XY^T$ is symmetric for $X,Y\in\{A,B,C,D\}$).
\item[(b)] $A$ is skew and $B$, $C$, $D$ are symmetric.
\item[(c)] $AA^T + B^2 + C^2 + D^2$ is the identity matrix scaled by $4n$.
\end{enumerate}
\end{definition}

The primary reason that good matrices have 
attracted sustained interest for almost 50 years
is because of the following theorem
of \cite{wallis1970combinatorial} which
provides a construction for
skew Hadamard matrices.

\begin{theorem}
Let $A$, $B$, $C$, $D$ be good matrices of order $n$.  Then
\[ \begin{bmatrix}
A & B & C & D \\
-B & A & D & -C \\
-C & -D & A & B \\
-D & C & -B & A
\end{bmatrix} \]
is a skew Hadamard matrix of order $4n$.
\end{theorem}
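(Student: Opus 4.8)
The plan is to verify the two defining properties of a skew Hadamard matrix of order $4n$ directly: first that $HH^T = 4n\,I_{4n}$ (the Hadamard condition), and second that $H + H^T = 2I_{4n}$ (the skew condition), where $H$ denotes the displayed block matrix. Its entries are all $\pm1$ since $A$, $B$, $C$, $D$ are, so only these two matrix identities remain. I would treat $H$ as a $4\times4$ array of $n\times n$ blocks and compute both products blockwise.

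For the Hadamard condition, I would compute the $(i,j)$ block of $HH^T$ as $\sum_k H_{ik} H_{jk}^T$. Each diagonal block reduces to a permutation of $AA^T + BB^T + CC^T + DD^T$; using that $B$, $C$, $D$ are symmetric (property (b)) this equals $AA^T + B^2 + C^2 + D^2$, which is $4n\,I_n$ by property (c). The off-diagonal blocks are where amicability does the work: each such block is an alternating sum of terms $XY^T - YX^T$ with $X\neq Y$, and property (a), which says $XY^T$ is symmetric and hence $XY^T = YX^T$, forces every such difference to vanish. Carrying this out for one representative, say the $(1,2)$ block $-AB^T + BA^T + CD^T - DC^T$, exhibits the cancellation; the remaining eleven off-diagonal blocks follow identically.

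For the skew condition, I would again work blockwise. The diagonal blocks of $H + H^T$ are each $A + A^T$, and since $A$ is skew its diagonal entries equal $1$ while its off-diagonal entries satisfy $a_{ij} = -a_{ji}$, so $A + A^T = 2I_n$. The off-diagonal blocks of $H + H^T$ are differences of the form $X - X^T$ for $X \in \{B,C,D\}$ paired against their transposed counterparts in $H^T$, all of which vanish because $B$, $C$, $D$ are symmetric. Hence $H + H^T = 2I_{4n}$, completing the proof.

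The only real obstacle is bookkeeping: the array has a quaternion-like sign pattern, and one must track signs carefully across the twelve off-diagonal blocks of $HH^T$ to confirm that amicability cancels every cross term. I would manage this by exploiting the evident symmetry of the array, whose four block rows are permutations of one another up to sign, so that checking one diagonal block and one off-diagonal block settles all cases and reduces the computation to a short routine verification.
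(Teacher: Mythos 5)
Your proof is correct. Note that the paper itself gives no proof of this theorem---it is quoted from Wallis's thesis \cite{wallis1970combinatorial} and none of the appendix proofs cover it---so there is no in-paper argument to compare against; your direct blockwise verification is the standard one. Both halves check out: every off-diagonal block of $HH^T$ is a sum of two terms of the form $\pm(XY^T-YX^T)$, killed by amicability, and the diagonal blocks each equal $AA^T+B^2+C^2+D^2=4nI_n$; for skewness, $A$ occurs only on the block diagonal (where $A+A^T=2I_n$), and each off-diagonal pair $(H_{ij},H_{ji})$ is of the form $(X,-X)$ with $X$ one of $\pm B,\pm C,\pm D$, so $H_{ij}+H_{ji}^T=X-X^T=0$ by symmetry. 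One caveat: the theorem requires condition (a) of Definition~1 (genuine amicability), not the circulant condition (a$'$) the paper adopts later, under which one must first reverse the rows of $B$, $C$, $D$; your argument correctly uses (a) as stated.
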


Skew Hadamard matrices are conjectured to exist in all orders $4n$ with $n\geq1$
and much effort has gone into constructing them in as many orders
as possible; the current smallest unknown order is $n=69$
and the previous smallest unknown order $n=47$ was solved ten years ago
by~\cite{djokovic2007skew}.

From a computational perspective the issue with good matrices
is that the search space is too large to effectively search.
The standard remedy for this is
to focus on \emph{circulant} matrices for which
the search can be performed more effectively.
For example, in~\cite{gene2015} good matrices are defined
as in Definition~\ref{def:good} except that condition~(a) is replaced
with the strictly stronger condition
\textit{\begin{enumerate}
\item[(a$'$)] $A$, $B$, $C$, $D$ are circulant.
\end{enumerate}}
Technically matrices which satisfy conditions (a$'$), (b), and~(c)
are not good matrices but
one can recover good matrices from them simply by reversing the order
of the rows of $B$, $C$, and~$D$ (thereby making them amicable with $A$).
Since using~(a$'$) is convenient to state our results
we will employ it for the remainder of this paper.

We often consider good matrices in terms of their first rows
$(a_0,\dotsc,a_{n-1})$, $(b_0,\dotsc,b_{n-1})$, $(c_0,\dotsc,c_{n-1})$,
and $(d_0,\dotsc,d_{n-1})$ which we call their \emph{defining rows}.
Since $A$ is skew and $B$, $C$, and~$D$ are symmetric, the defining rows
of a set of good matrices must satisfy $a_0=1$ and
\[ a_i=-a_{n-i},\quad b_i=b_{n-i},\quad c_i=c_{n-i},\quad d_i=d_{n-i} \label{eq:sym} \]
for $1\leq i<n/2$.  Furthermore, we often
use $A$, $B$, $C$, and~$D$ to denote the defining rows themselves,
in which case we say that $A$ is a skew sequence and that
$B$, $C$, and~$D$ are symmetric sequences.

\subsection{Equivalence operations}\label{sec:equiv}

The following three invertible operations are well-known and
can be applied to a set of
good matrices $A$, $B$, $C$, $D$ to produce another set of good matrices
we consider \emph{equivalent} to $A$, $B$, $C$, $D$.

\begin{enumerate}
\item Reorder $B$, $C$, $D$ in any way.
\item Negate any of $B$, $C$, $D$.
\item Apply an automorphism of $\Z_n=\{0,\dotsc,n-1\}$ to the indices of the defining rows
of each of $A$, $B$, $C$, and~$D$.
\end{enumerate}

For example, the negation equivalence operation implies that any set of good
matrices is equivalent to one with $b_0=c_0=d_0=1$.

\subsection{Filtering theorems}

The search space for good matrices of order~$n$ is enormous.
Even if one takes into account the symmetry properties of
the defining rows
and fixes the first entries to be positive, each defining row
contains $\floor{n/2}$ unspecified entries.  Since each
unspecified entry can be one of two values, there are a total of $2^{4\floor{n/2}}=4^{n-1}$
possibilities in the search space.

In order to cut down the size of the search space so that we can run
exhaustive searches in larger orders we need filtering theorems.
One of the most powerful filtering theorems is based on the
following which gives an alternative characterization of good matrices~\cite{djokovic2015compression}.

\begin{theorem}\label{thm:psd}
Let $A$, $B$, $C$, $D$ be sequences of length $n$
with $A$ skew and $B$, $C$, $D$ symmetric.
Then $A$, $B$, $C$, $D$ are the defining rows of a set
of good matrices if and only if
for all integers $k$ we have 
\[ \PSD_A(k) + \PSD_B(k) + \PSD_C(k) + \PSD_D(k) = 4n . \]
Here $\PSD_X(k) \coloneqq \abs[\big]{\sum_{j=0}^{n-1}x_j\omega^{jk}}^2$
with $\omega\coloneqq\exp(2\pi i/n)$ is
the power spectral density of $X=[x_0,\dotsc,x_{n-1}]$.
\end{theorem}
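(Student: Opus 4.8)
The plan is to exploit the fact that all the matrices in condition~(c) of Definition~\ref{def:good} are circulant and hence share a common eigenbasis, namely the Fourier vectors. Writing $v^{(k)} = (\omega^{mk})_{m=0}^{n-1}$ for $0 \le k < n$, a direct computation shows that every real circulant matrix $X$ with defining row $(x_0,\dotsc,x_{n-1})$ satisfies $Xv^{(k)} = \hat x_k\, v^{(k)}$, where $\hat x_k \coloneqq \sum_{j=0}^{n-1} x_j \omega^{jk}$. I would state this diagonalization first, deriving it in one line by letting $X$ act on $v^{(k)}$ and reindexing, since it reduces the matrix identity in~(c) to a collection of scalar identities, one per frequency $k$.

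The key step is to evaluate the eigenvalue of each summand of~(c) on $v^{(k)}$. Because the $x_j$ are real, the transpose $X^T$ is the circulant matrix with defining row $(x_0, x_{n-1}, \dotsc, x_1)$, and a short index substitution $j \mapsto n-j$ gives that its eigenvalue on $v^{(k)}$ is $\sum_j x_{n-j}\omega^{jk} = \overline{\hat x_k}$. Hence $XX^T$ acts on $v^{(k)}$ by $\hat x_k\overline{\hat x_k} = \abs{\hat x_k}^2 = \PSD_X(k)$ for any real circulant $X$. The symmetry hypotheses now come into play: since $B$, $C$, $D$ are symmetric we have $B = B^T$, so $B^2 = BB^T$, and likewise for $C$ and $D$. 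Thus condition~(c) is equivalent to $AA^T + BB^T + CC^T + DD^T = 4nI$, whose left-hand side acts on $v^{(k)}$ by $\PSD_A(k)+\PSD_B(k)+\PSD_C(k)+\PSD_D(k)$.

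To finish, I would observe that $4nI$ acts on every $v^{(k)}$ by the scalar $4n$, and that since the $v^{(k)}$ form a basis, two matrices agreeing on all of them must be equal. Therefore~(c) holds if and only if $\PSD_A(k)+\PSD_B(k)+\PSD_C(k)+\PSD_D(k) = 4n$ for each $k$ with $0 \le k < n$; periodicity of $\omega^{jk}$ in $k$ extends this to all integers $k$. Finally, because the sequences are assumed to be skew and symmetric as required, being the defining rows of a set of good matrices is equivalent to condition~(c) alone, so the two stated conditions are equivalent.

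I do not anticipate a genuine difficulty, as this is in essence the Wiener--Khinchin correspondence between periodic autocorrelation and the power spectral density, repackaged through the diagonalization of circulant matrices. The one point requiring care is the transpose computation $\widehat{X^T}_k = \overline{\hat x_k}$: a slip in the DFT sign convention or in the reindexing would alter a conjugate and break the identification of $XX^T$ with $\PSD_X$, so I would fix the convention at the outset and verify that step explicitly.
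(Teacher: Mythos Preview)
Your argument is correct. It differs from the paper's proof mainly in that the paper does not carry out the computation itself: there, condition~(c) is rewritten as the periodic-autocorrelation identity $\PAF_A+\PAF_B+\PAF_C+\PAF_D=[4n,0,\dotsc,0]$ and then an external result (Theorem~2 of the \DJ okovi\'c--Kotsireas compression paper) is cited to pass from PAF to PSD. Your approach bypasses the PAF layer entirely by diagonalizing the circulants on the Fourier eigenbasis, which makes the proof self-contained and makes explicit exactly where the symmetry hypotheses on $B$, $C$, $D$ are used (namely to replace $B^2$ by $BB^T$). The two arguments are of course the same fact under the hood---the Wiener--Khinchin correspondence---but yours supplies the content rather than deferring it to a citation.
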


This is important because of the following well-known corollary:

\begin{corollary}\label{cor:psd}
Let $A$, $B$, $C$, $D$ be the defining rows of a set of good matrices
of order~$n$.
If\/ $S$ is a subset of\/ $\{A,B,C,D\}$ then
\[ \sum_{X\in S} \PSD_X(k) \leq 4n \]
for all integers $k$.
\end{corollary}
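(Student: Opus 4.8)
The plan is to derive the inequality directly from the equality in Theorem~\ref{thm:psd}, using the single additional fact that each power spectral density value is nonnegative. First I would observe that for any sequence $X=[x_0,\dotsc,x_{n-1}]$ and any integer $k$, the quantity
\[ \PSD_X(k)=\abs[\big]{\textstyle\sum_{j=0}^{n-1}x_j\omega^{jk}}^2 \]
is the squared modulus of a complex number and hence satisfies $\PSD_X(k)\geq 0$. This is the only genuinely substantive ingredient, and it is immediate from the definition of the PSD as an absolute value squared.

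Next, since $A$, $B$, $C$, $D$ are assumed to be the defining rows of a set of good matrices of order $n$, Theorem~\ref{thm:psd} applies and yields
\[ \PSD_A(k)+\PSD_B(k)+\PSD_C(k)+\PSD_D(k)=4n \]
for every integer $k$. Now fix an arbitrary subset $S\subseteq\{A,B,C,D\}$. The partial sum $\sum_{X\in S}\PSD_X(k)$ is obtained from this full sum by discarding precisely the terms indexed by the complementary set $\{A,B,C,D\}\setminus S$. By the nonnegativity established above, each discarded term is $\geq 0$, so removing them cannot increase the total; hence $\sum_{X\in S}\PSD_X(k)\leq 4n$, as claimed.

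Because the statement quantifies over \emph{all} subsets $S$ and \emph{all} integers $k$, I would note that the argument above is uniform in both: nonnegativity of each summand holds for every $k$, and the bound holds regardless of which terms are dropped. I expect no real obstacle here — the corollary is an elementary consequence of Theorem~\ref{thm:psd}, and the only point requiring any attention is recognizing that the PSD is manifestly nonnegative so that truncating the sum is guaranteed to produce an upper bound rather than merely an approximation.
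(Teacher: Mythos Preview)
Your proposal is correct and follows essentially the same approach as the paper: invoke Theorem~\ref{thm:psd} to obtain the full sum equal to $4n$, then use the nonnegativity of each $\PSD_X(k)$ to bound any partial sum. The paper's proof is just a terser version of exactly this argument.
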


For example, this corollary says that if $\PSD_A(k) > 4n$ for some
$k$ then $A$ cannot be part of a set of good matrices.
In other words, we can filter $A$ and ignore it in our enumeration method.
A second well-known corollary of Theorem~\ref{thm:psd} comes from setting $k=0$
in the statement of the theorem:

\begin{corollary}\label{cor:dio}
Let $A$, $B$, $C$, $D$ be the defining rows of a set of good matrices
of odd order~$n$.
Then
\[ \row(B)^2 + \row(C)^2 + \row(D)^2 = 4n-1 \]
where $\row(X)$ denotes the sum of entries in $X$.
\end{corollary}

Since $\row(X)$ is an integer when $X$ has integer entries
this corollary tells us that every set of good matrices gives
a decomposition of $4n-1$ into a sum of three integer squares
which is a rather restrictive condition.  For example,
when~$n$ is~$69$ there are only three ways to write $4n-1$
as a sum of exactly three squares of nonnegative integers, namely,
\[ 1^2+7^2+15^2,\quad 5^2+5^2+15^2,\quad\text{and}\quad 5^2+9^2+13^2 . \]
One must also consider decompositions with
squares of negative integers
but the negative integers
can be found with the following simple (often unstated) lemma.
It gives a criterion to determine the signs
of $\row(B)$, $\row(C)$, and $\row(D)$.
\begin{lemma}\label{lem:sign}
Let $(a_0,\dotsc,a_{n-1})$, $(b_0,\dotsc,b_{n-1})$, $(c_0,\dotsc,c_{n-1})$, $(d_0,\dotsc,d_{n-1})$ be the defining rows
of a set of good matrices $A$, $B$, $C$, $D$ of odd order $n$ with $b_0=c_0=d_0=1$.  Then
\[ \row(B) \equiv \row(C) \equiv \row(D) \equiv n \pmod{4} . \]
\end{lemma}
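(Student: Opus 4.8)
The plan is to derive the congruence for $\row(B)$ directly from the symmetry of the sequence $B$ together with the hypotheses $b_0=1$ and $n$ odd, and then observe that the arguments for $\row(C)$ and $\row(D)$ are word-for-word identical. It is worth flagging at the outset that conditions (a) and (c) of the definition of good matrices are never invoked: the statement is really a fact about any symmetric $\pm1$ sequence of odd length whose initial entry is $1$.

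First I would exploit the symmetry relation $b_i=b_{n-i}$, which holds for $1\leq i<n/2$. Since $n$ is odd there is no self-paired middle index, so the indices $1,\dotsc,n-1$ split cleanly into the $(n-1)/2$ pairs $\{i,n-i\}$ for $i=1,\dotsc,(n-1)/2$, and $B$ is constant on each pair. Peeling off the term $b_0=1$ and grouping the remaining $n-1$ entries into these pairs gives
\[ \row(B) = 1 + 2\sum_{i=1}^{(n-1)/2} b_i . \]

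The key observation is then a parity count. The sum $\sum_{i=1}^{(n-1)/2} b_i$ has exactly $(n-1)/2$ terms, each equal to $\pm1$, and any such sum is congruent to its number of terms modulo $2$, so $\sum_{i=1}^{(n-1)/2} b_i \equiv (n-1)/2 \pmod 2$. Multiplying through by the factor $2$ sitting in front of the sum promotes this mod-$2$ statement to a mod-$4$ one, namely $2\sum_{i=1}^{(n-1)/2} b_i \equiv n-1 \pmod 4$. Substituting back yields $\row(B)\equiv 1+(n-1)=n\pmod 4$, as required.

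The single point demanding a little care — and the closest thing to an obstacle in an otherwise elementary argument — is this last transition, in which a statement about the parity of the unreduced sum is converted into a statement about the residue of its double modulo $4$; the factor of $2$ produced by the symmetric pairing is precisely what carries out that conversion, and it is here that the oddness of $n$ (guaranteeing no unpaired middle term) is used. Since the same reasoning applies verbatim with $c_i$ and $d_i$ in place of $b_i$, the three congruences hold simultaneously and the lemma follows.
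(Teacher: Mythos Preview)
Your proof is correct and follows essentially the same route as the paper: both write $\row(X)=1+2\sum_{i=1}^{(n-1)/2}x_i$ from the symmetry and $x_0=1$, observe that each $\pm1$ summand is $1\pmod 2$ so the half-sum is $(n-1)/2\pmod 2$, and then double to obtain the mod-$4$ congruence. Your remark that only condition~(b) is used, not (a) or (c), is accurate and worth noting.
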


\subsection{Compression}

Although the filtering theorems greatly decrease the
number of possibilities for the defining rows of good matrices there
are still typically a large number of possibilities which are not
filtered by Corollaries~\ref{cor:psd} and~\ref{cor:dio}.
An effective way of shrinking the search space even farther is to apply
properties that ``compressions'' of good matrices must satisfy.
Such theorems are only applicable in composite orders; we will assume that
the order~$n$ is a multiple of $3$.
To our knowledge this is the first time compression has
been used in the search for good matrices and it allows us
to search larger orders than other methods.  For example,
\cite{djokovic2018goethals} use the aforementioned filtering
corollaries but do not apply compression and are only able to search orders
up to~$49$.

The \emph{3-compression} of a sequence $X=(x_0,\dotsc,x_{n-1})$
of length $n$ is the sequence $X'$ of length $m\coloneqq n/3$ whose $k$th
entry is $x_k+x_{k+m}+x_{k+2m}$.
Since we are concerned with the case where the entries of $X$
are $\pm1$ the entries of $X'$ are either $\pm1$
or $\pm3$.
Compression is useful because
\cite{djokovic2015compression} showed that
Theorem~\ref{thm:psd} holds for compressions of the defining
rows of good matrices.

\begin{theorem}\label{thm:compress}
Let $A$, $B$, $C$, $D$ be the defining rows of a set of good matrices
of order~$n$ and let $A'$, $B'$, $C'$, $D'$ be compressions of
those rows.
Then for all integers $k$ we have 
\[ \PSD_{A'}(k) + \PSD_{B'}(k) + \PSD_{C'}(k) + \PSD_{D'}(k) = 4n . \]
\end{theorem}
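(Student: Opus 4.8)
The plan is to reduce Theorem~\ref{thm:compress} to Theorem~\ref{thm:psd} via the single identity
\[ \PSD_{X'}(k)=\PSD_X(3k), \]
valid for any length-$n$ sequence $X$ with $n=3m$ and its 3-compression $X'$. Granting this, the conclusion is immediate: Theorem~\ref{thm:psd} applied to the good matrices $A$, $B$, $C$, $D$ gives $\PSD_A(\ell)+\PSD_B(\ell)+\PSD_C(\ell)+\PSD_D(\ell)=4n$ for every integer $\ell$, and specializing to $\ell=3k$ and rewriting each term with the identity turns the left-hand side into exactly $\PSD_{A'}(k)+\PSD_{B'}(k)+\PSD_{C'}(k)+\PSD_{D'}(k)$.

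The heart of the argument is the computation of the length-$m$ DFT underlying $\PSD_{X'}$. Writing $\omega=\exp(2\pi i/n)$ and noting that the relevant $m$th root of unity is $\omega'=\exp(2\pi i/m)=\omega^3$, I would expand
\[ \sum_{r=0}^{m-1} X'_r\,\omega^{3rk}=\sum_{r=0}^{m-1}\paren*{x_r+x_{r+m}+x_{r+2m}}\,\omega^{3rk}. \]
The decisive observation is that $\omega^{3m}=\omega^{n}=1$, so that $\omega^{3(r+sm)k}=\omega^{3rk}$ for $s=0,1,2$. This lets me refold the three summands of each compressed entry back into one sum over $j=0,\dots,n-1$ and recognize the result as $\sum_{j=0}^{n-1}x_j\,\omega^{3jk}$, which is precisely the length-$n$ DFT of $X$ evaluated at frequency $3k$. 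Taking the squared modulus of both sides then yields $\PSD_{X'}(k)=\PSD_X(3k)$.

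Conceptually this is the familiar principle that summing equally spaced blocks in the time domain (decimation by a factor of $3$) corresponds to subsampling in the frequency domain, and I do not expect a genuine obstacle. The only point requiring care is the bookkeeping with roots of unity---specifically, verifying that the index shift $j\mapsto j+sm$ leaves $\omega^{3jk}$ invariant precisely because $3m=n$, and that $\omega^3$ is the correct primitive $m$th root appearing in $\PSD_{X'}$. Notably, the forward direction of Theorem~\ref{thm:psd} is all that is used, so no symmetry property of the compressed sequences $A'$, $B'$, $C'$, $D'$ needs to be checked for this particular identity (such properties are inherited from $X$ in any case).
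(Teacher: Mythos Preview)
Your argument is correct. The identity $\PSD_{X'}(k)=\PSD_X(3k)$ for a $3$-compression of a length-$n=3m$ sequence is exactly the time-domain aliasing/frequency-domain subsampling relation you describe, and the bookkeeping with $\omega^{3m}=\omega^n=1$ is accurate. Specializing Theorem~\ref{thm:psd} at $\ell=3k$ then finishes the job.

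As for comparison with the paper: the paper does \emph{not} supply its own proof of Theorem~\ref{thm:compress}. It simply attributes the result to \DJ okovi\'c and Kotsireas (2015), and the appendix lists proofs only for Lemma~\ref{lem:sign}, Theorem~\ref{thm:product}, and Corollaries~\ref{cor:psd}, \ref{cor:dio}, \ref{cor:parity}. The cited source proves a more general compression theorem, phrased (as the appendix proof of Theorem~\ref{thm:psd} hints) in terms of the periodic autocorrelation $\PAF$ rather than the $\PSD$ directly; one shows that compression preserves the $\PAF$ identity $\PAF_A+\PAF_B+\PAF_C+\PAF_D=[4n,0,\dotsc,0]$ and then passes to the $\PSD$ via the DFT. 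Your route is more direct and fully self-contained within this paper's framework: it bypasses the $\PAF$ detour by observing that the DFT of a compression is literally a subsample of the original DFT. The $\PAF$ approach has the advantage of generalizing cleanly to arbitrary compression factors and to the broader class of complementary sequences treated in the 2015 paper, but for the specific statement here your argument is the shorter one.
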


\subsection{A new product theorem}

Our study of good matrices
uncovered the following relationship that the entries
of good matrices must satisfy.  This is an analog of the product
theorem that was proven by John Williamson
for Williamson matrices~\cite{williamson1944hadamard}.

\begin{theorem}\label{thm:product}
Let $(a_0,\dotsc,a_{n-1})$, $(b_0,\dotsc,b_{n-1})$, $(c_0,\dotsc,c_{n-1})$, $(d_0,\dotsc,d_{n-1})$ be the defining rows
of a set of good matrices $A$, $B$, $C$, $D$ of odd order $n$.  Then $a_k b_k c_k d_k = -a_{2k\bmod n}b_0c_0d_0$ for all\/ $1\leq k<n$.
\end{theorem}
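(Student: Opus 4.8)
The plan is to recast the problem in the group ring $R=\Z[\Z_n]$, writing each defining row as a polynomial $A=\sum_j a_j t^j$, $B=\sum_j b_j t^j$, and so on, with $t^n=1$, so that transposition of a circulant corresponds to $t\mapsto t^{-1}$. The skew hypothesis then reads $\bar A=2-A$, which is the same as saying $\hat A(k)=\sum_j a_j\omega^{jk}$ has real part identically $1$ for every $k$; the symmetric hypotheses read $\bar B=B$, $\bar C=C$, $\bar D=D$, i.e.\ $\hat B,\hat C,\hat D$ are real; and condition~(c) of Definition~\ref{def:good} becomes the single relation $A\bar A+B^2+C^2+D^2=4n$. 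Before attacking the identity I would put it in a more symmetric form: since each entry is $\pm1$ and $2$ is a unit modulo the odd number $n$, the claim is equivalent to the assertion that
\[ g_k\coloneqq a_k a_{2k}\,b_k c_k d_k \]
equals the constant $-b_0c_0d_0$ for every $k\ne 0$ (whereas $g_0=b_0c_0d_0$, so the value deliberately jumps at the origin).

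To expose the algebraic structure I would introduce the quaternion $Q=A+B\mathbf i+C\mathbf j+D\mathbf k$ with coefficients in the commutative ring $R$. Because $A,B,C,D$ commute, the cross terms coming from the symmetric part cancel and a short computation gives $Q^2=(A^2-B^2-C^2-D^2)+2A(B\mathbf i+C\mathbf j+D\mathbf k)$; substituting condition~(c) in the form $B^2+C^2+D^2=4n-2A+A^2$ collapses the scalar part to the clean identity $A^2-B^2-C^2-D^2=2A-4n$. This is the one place where the skew/symmetric split is genuinely used, and it packages a whole family of quadratic relations among the entries. The essential difficulty, though, is that $g_k$ is a \emph{pointwise} (Hadamard) product of entries, while every coefficient of a group-ring product such as $Q^2$ is a \emph{convolution} $\sum_{i+j\equiv k}(\cdots)$: no single coefficient of a fixed element of $R$ is a pointwise product of four entries, so these convolution identities cannot by themselves deliver the theorem.

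To cross this gap I would pass to the dual side. Writing every entry through the inverse transform and using that $k\mapsto 2k$ permutes $\Z_n$, the target becomes the spectral identity
\[ \frac1n\sum_m \hat A(m)\hat A(l-2m)+\frac{b_0c_0d_0}{n^2}\sum_{q+r+s\equiv l}\hat B(q)\hat C(r)\hat D(s)=2 \]
for every $l$. Using $\hat A(m)=1+i\alpha_m$ together with $\sum_m\alpha_m=0$ (which holds because $\sum_m\hat A(m)=n a_0=n$ is real), the first sum simplifies to $1-\tfrac1n\sum_m\alpha_m\alpha_{l-2m}$, so the whole content of the theorem is the claim that a \emph{twisted autocorrelation} of the imaginary spectrum of $A$ matches a triple convolution of the real spectra of $B$, $C$, $D$. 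The tools available are exactly the reality of $\hat B,\hat C,\hat D$ and the sphere condition $\abs{\hat A(k)}^2+\hat B(k)^2+\hat C(k)^2+\hat D(k)^2=4n$ supplied by Theorem~\ref{thm:psd}.

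I expect the main obstacle to lie precisely here: the sphere condition is \emph{diagonal} in the frequency variable---it constrains $\alpha_m^2$ by the values of $\hat B,\hat C,\hat D$ at the same index $m$---whereas the identity I need is an \emph{off-diagonal}, twisted bilinear relation coupling the indices $m$ and $l-2m$. Reconciling a same-frequency quadratic constraint with a twisted-frequency bilinear identity is the crux, and I anticipate it cannot be done by manipulating the spectrum alone; the plan is to reintroduce the $\pm1$ integrality of the entries (so that each $g_k$ is forced into $\{\pm1\}$ and it suffices to pin a single common value off the origin) and to track the vector part of $Q^2$ to recover the missing cross-frequency information. Establishing this collapse---equivalently, the constancy of $g_k$ away from $k=0$---is the step I expect to absorb essentially all of the work.
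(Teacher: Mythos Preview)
Your proposal is not a proof: it is a plan that correctly isolates the difficulty and then stops short of resolving it. You yourself say that ``establishing this collapse \dots\ is the step I expect to absorb essentially all of the work''; nothing in the quaternion or spectral manipulations you outline actually performs that step. The identity $A^2-B^2-C^2-D^2=2A-4n$ in $\Z[\Z_n]$ is correct and equivalent to the paper's starting relation, but reducing it mod~$2$ is vacuous because every $a_k,b_k,c_k,d_k\equiv1$, so $A\equiv B\equiv C\equiv D\equiv J\pmod2$. On the Fourier side your diagnosis is also right---the sphere condition from Theorem~\ref{thm:psd} is diagonal in frequency while the target is a twisted bilinear relation---and there is no reason to expect the vector part of $Q^2$ to supply the missing off-diagonal coupling; it only yields the three convolution identities $2AB$, $2AC$, $2AD$, which are no closer to a Hadamard product than $A^2$ is.

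The idea you are missing is to pass to the $\{0,1\}$-indicators $P_X=(X+J)/2=\sum_{x_i=1}t^i$ and work in characteristic~$2$. Over $\Z_2[\Z_n]$ the Frobenius (Freshman's dream) gives $P_X^2\equiv\sum_{x_i=1}t^{2i}$, which is precisely the bridge from convolution to the index-doubling $k\mapsto2k$ appearing in the statement. Your group-ring identity, rewritten in the $P_X$, becomes $-P_A^2+P_A+P_B^2+P_C^2+P_D^2=nI-(n-p_B-p_C-p_D)J$; since $n$ and each $p_X$ are odd this reduces to $I\pmod2$, and reading off the coefficient of $t^i$ for $i\neq0$ shows that an even number of $\{a_{i/2},a_i,b_{i/2},c_{i/2},d_{i/2}\}$ equal $1$, i.e.\ $a_{i/2}a_ib_{i/2}c_{i/2}d_{i/2}=-1$. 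That single mod-$2$ trick replaces the entire spectral program; without it, your approach has a genuine gap.
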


Let $\bar{x}\coloneqq(1-x)/2$, i.e., the mapping $x\mapsto\bar{x}$ is the group
isomorphism between $\Z_4^*=\{\pm1\}$ and $\Z_2=\{0,1\}$.
Our enumeration method uses Theorem~\ref{thm:product} in the following form.

\begin{corollary}\label{cor:parity}
Let $(a_0,\dotsc,a_{n-1})$, $(b_0,\dotsc,b_{n-1})$, $(c_0,\dotsc,c_{n-1})$, $(d_0,\dotsc,d_{n-1})$ be the defining rows
of a set of good matrices $A$, $B$, $C$, $D$ of odd order $n$ with
$b_0=c_0=d_0=1$.
Then for $1\leq k<n/2$ we have in\/ $\Z_2$
\[ \bar{a}_k + \bar{a}_{2k} + \bar{b}_k + \bar{c}_k + \bar{d}_k = 1 \]
and when $n=3m$ we have $\bar{b}_{m}+\bar{c}_{m}+\bar{d}_{m}=1$.
\end{corollary}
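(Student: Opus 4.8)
The plan is to derive both identities from Theorem~\ref{thm:product} by transporting its multiplicative relation into $\Z_2$ through the isomorphism $x\mapsto\bar x$. The essential observation is that this map is a group homomorphism from $(\{\pm1\},\times)$ to $(\Z_2,+)$, so that $\overline{xy}=\bar x+\bar y$ and $\overline{-1}=1$. Consequently any identity asserting that a product of $\pm1$'s equals another product of $\pm1$'s becomes a linear equation over $\Z_2$ once the bar map is applied to both sides, and no further combinatorial input is required.

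First I would specialize Theorem~\ref{thm:product} using the hypothesis $b_0=c_0=d_0=1$, which reduces the right-hand side to $-a_{2k\bmod n}$ and yields $a_kb_kc_kd_k=-a_{2k\bmod n}$ for $1\le k<n$. Applying the bar map, together with $\overline{-a_{2k\bmod n}}=1+\bar a_{2k\bmod n}$, turns this into $\bar a_k+\bar b_k+\bar c_k+\bar d_k=1+\bar a_{2k\bmod n}$ in $\Z_2$. For the range $1\le k<n/2$ we have $2k<n$, so $2k\bmod n=2k$ and no modular folding occurs; moving $\bar a_{2k}$ to the left (where addition and subtraction coincide in $\Z_2$) produces the first claimed identity directly.

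The second identity is the instance $k=m$ of the same relation, where the arithmetic special to $n=3m$ intervenes. Here the doubled index satisfies $2m\equiv-m\pmod n$, and the skew symmetry $a_i=-a_{n-i}$ evaluated at $i=m$ (note that $n-m=2m$) gives $a_{2m}=-a_m$, hence $\bar a_{2m}=1+\bar a_m$ and $\bar a_m+\bar a_{2m}=1$. Substituting this into the $k=m$ case collapses the two $a$-terms to a constant, leaving an equation involving only $\bar b_m$, $\bar c_m$, and $\bar d_m$, which is the stated relation. In this sense the $n=3m$ statement is not an independent fact but the simplified form that the general identity assumes at the self-paired index $k=m$ fixed (up to sign) by doubling.

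I expect the only delicate point to be the sign and index bookkeeping at this special index. One must verify that $k=m$ lies in the admissible range $1\le k<n/2$ (it does, since $m<3m/2$), that $2m\bmod n$ is genuinely $2m$ rather than a folded value, and that the two occurrences of $a$ reduce consistently under the skew relation. Once the homomorphism property is established, every remaining step is a mechanical translation of the multiplicative product theorem into the additive setting of $\Z_2$.
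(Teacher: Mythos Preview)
Your approach is identical to the paper's: transport Theorem~\ref{thm:product} into $\Z_2$ via the bar homomorphism, then specialize to $k=m$ using the skew relation. The derivation of the first identity is correct and matches the paper line for line.

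For the second identity there is a gap. You correctly obtain $\bar a_m+\bar a_{2m}=1$ from $a_{2m}=-a_m$, and the $k=m$ instance of the first identity reads $\bar a_m+\bar a_{2m}+\bar b_m+\bar c_m+\bar d_m=1$. Substituting gives $\bar b_m+\bar c_m+\bar d_m=0$, not $1$. You asserted that the substitution yields ``the stated relation'' without carrying the arithmetic through; had you done so you would have seen the discrepancy.

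This is not a defect in your method but a misprint in the corollary: the correct value is $0$. The order-$27$ example in Figure~\ref{fig:newgood} has $(b_9,c_9,d_9)=(+1,-1,-1)$, so $\bar b_9+\bar c_9+\bar d_9=0$, and the four SAT clauses displayed in Step~4 of Section~\ref{sec:method} in fact encode $\bar b_m+\bar c_m+\bar d_m=0$ (they force an \emph{odd} number of the three Booleans to be true), contrary to the accompanying sentence. The paper's own proof reaches the printed value $1$ only through a slip at the same spot, writing $\bar a_{2k}=-\bar a_{n-2k}$ when the bar of a negation is $\overline{-x}=1+\bar x$, not $-\bar x$. Your intermediate computation is the correct one; just finish the last line and record the value you actually obtain.
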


We give proofs of Lemma~\ref{lem:sign}, Theorem~\ref{thm:product},
and Corollaries~\ref{cor:psd}, \ref{cor:dio}, and~\ref{cor:parity}
in the appendix.

\section{The SAT+CAS paradigm}\label{sec:sat+cas}

The SAT+CAS paradigm is a new method of solving certain kinds of computational
problems which harnesses the power of two fields of computer science:
satisfiability checking and symbolic computation.
The paradigm is particularly useful for problems that have
a significant search component and a rich mathematical
component.

Briefly, this is due to the fact that modern SAT solvers
have extremely efficient search procedures
but lack the domain knowledge to effectively deal with rich mathematical
concepts.  On the other hand, computer algebra systems like
\textsc{Maple}, \textsc{Mathematica}, and \textsc{SageMath}
as well as special-purpose numerical libraries
can very effectively deal with rich mathematics.

The fact that satisfiability checking and symbolic computation had great potential synergy
was first pointed out by Erika \'Abrah\'am in an invited
talk at the conference ISSAC in 2015~\cite{abraham2015building}.
At almost the same time this synergy was demonstrated
by the system \textsc{MathCheck}
presented at the conference CADE~\cite{zulkoskimathcheck}.
The system \textsc{MathCheck} coupled a SAT solver with a computer
algebra system and solved open cases
of two conjectures in graph theory and was later extended to
solve open cases in combinatorial conjectures~\cite{ZulkoskiBHKCG17}.
Since then, the $\text{SC}^2$ project~\cite{sc2}
has organized an annual workshop on this topic for the last three years.

To illustrate the usefulness of the SAT+CAS paradigm in the search for good matrices
specifically, consider Corollaries~\ref{cor:psd}, \ref{cor:dio}, and~\ref{cor:parity}
from Section~\ref{sec:background}.  These corollaries all state
various properties that good matrices must satisfy
but each property arises from a different branch of mathematics.
Consequently, each property is best dealt with using systems which
have been optimized to deal with that particular branch:
\begin{itemize}
\item Corollary~\ref{cor:psd} is a statement in Fourier analysis
and is best checked using a system which has been
designed to efficiently compute Fourier transforms.
\item Corollary~\ref{cor:dio} (and Lemma~\ref{lem:sign}) are
number theoretic statements and possibilities
for the values $\row(B)$, $\row(C)$, and $\row(D)$
can be found by using a system which has been designed
to solve quadratic Diophantine systems.
\item Corollary~\ref{cor:parity} is a statement using
arithmetic in $\Z_2$ and can efficiently be encoded directly
in a SAT instance.
\end{itemize}

\subsection{The programmatic SAT paradigm}

A programmatic SAT solver, introduced in~\cite{ganesh2012lynx},
is a new kind of SAT solver which allows solving satisfiability
problems with more expressiveness than the kinds of
problems that can be solved with a standard SAT solver.
Briefly, a programmatic SAT solver will use
some custom code to examine the current partial assignment and
if it cannot be extended into
a full assignment (based on some known theorem from the 
problem domain) a clause will be learned encoding that fact.
Because one can think of ``calling'' the
programmatic SAT solver with the custom code this
code is known as a ``callback'' function.

Programmatic SAT solvers are conceptually similar
to SMT (SAT modulo theories) solvers and in some ways could
be considered a ``poor man's SMT''.
However, they also differ from SMT solvers in some key ways:

\subsubsection{Simplicity.}

SMT solvers are necessarily more
complicated than SAT solvers because they solve problems
in first-order logic instead of problems in propositional
logic.
This increase in complexity is justifiable for many problems
but some problems contain mostly just propositional logic with a handful of
``theory lemmas'' that 
are cumbersome to state in propositional logic but
can be given to a programmatic SAT solver
as necessary to help guide its search.

\subsubsection{Flexibility.}

Typically SMT solvers only support
theories which are fixed in advance;
for example, the SMT-LIB standard~\cite{BarFT-SMTLIB}
specifies a fixed number of theories and a common input and
output language for those theories.  However, not all
problems can naturally be expressed in those theories;
e.g., the current SMT-LIB standard does not include
theories involving Fourier transforms (or transcendental
functions or complex numbers).
In contrast, a programmatic SAT solver can learn
theory lemmas derived using Fourier transforms
so long as an appropriate CAS or library can be called.

\subsubsection{Tailored solving.}

Programmatic SAT solvers
can be tailored to solve specific
classes of problems (in addition to using the same state-of-the-art
techniques which make modern SAT solving so efficient).
For example, in this paper we develop a SAT solver
tailored to searching for good matrices.  In principle SMT
solvers could work in the same way but we are not aware of
any SMT solvers that offer a programmatic interface like this.

\subsection{A programmatic SAT encoding of good matrices}

We now describe how our programmatic SAT solver encoding
of good matrices works in detail.
We encode the entries of the defining rows of a set of good
matrices $A$, $B$, $C$, $D$ of order $n$ using $4n$ Boolean variables
where we let true values denote~$1$ and false values denote~$-1$.
By abuse of notation we use the same names for the Boolean
variables and the $\pm1$ entries that they represent but it
will be clear from the context if we are referring
to a Boolean or integer value.

In fact, using the symmetry constraints from Section~\ref{sec:background} we only
need to define the $2(n+1)$ variables $a_i$, $b_i$, $c_i$, $d_i$
that have indices $i$ with $0\leq i<n/2$.  In what follows we implicitly use this
whenever necessary; i.e., any variables with indices larger than $n/2$
are used for clarity only.

A programmatic SAT solver includes a callback function
that is run whenever the SAT solver's usual conflict analysis
fails to find a conflict.
The callback function examines the current partial assignment
and learns clauses that block the current assignment
(and extensions of the current assignment) from the search in the future.

If $x$ is a variable that appears in the current partial
assignment we let $x^{\cur}$ denote either the literal $x$
or $\lnot x$, whichever is true under the current assignment.
The callback function used in our encoding
of good matrices
works by encoding the property given in Corollary~\ref{cor:psd}
and its steps are described in detail below.

\begin{enumerate}
\item Let $S$ be a list containing the sequences
$(a_0,\dotsc,a_{n-1})$, $(b_0,\dotsc,b_{n-1})$, $(c_0,\dotsc,c_{n-1})$, $(d_0,\dotsc,d_{n-1})$
that have all their variables currently assigned.
\item If $S$ is empty then return control to the SAT solver.
\item Let $X$ be the first sequence in $S$ and compute the
power spectral density of $X$.
If $\PSD_X(k)>4n$ for some $k$ then learn a clause blocking the sequence $X$ from
the search in the future.  Explicitly, learn the clause
\[ \lnot x_0^{\cur}\lor\lnot x_1^{\cur}\lor\dotsb\lor\lnot x_{n-1}^{\cur} \]
and return control to the SAT solver.
\item If $S$ contains at least two sequences then let $Y$ be the second
sequence in $S$ and compute the power spectral density of $Y$.
If $\PSD_X(k)+\PSD_Y(k)>4n$ for some $k$ then learn a clause blocking the
sequences $X$ and $Y$ from occurring together again in the future.  Explicitly, learn the clause
\[ \lnot x_0^{\cur}\lor\dotsb\lor\lnot x_{n-1}^{\cur}\lor\lnot y_0^{\cur}\lor\dotsb\lor\lnot y_{n-1}^{\cur} \]
and return control to the SAT solver.
\item If $S$ contains at least three sequences then let $Z$ be the third sequence in $S$
and compute the power spectral density of $Z$.  If $\PSD_X(k)+\PSD_Y(k)+\PSD_Z(k)>4n$ for some $k$ then learn a clause blocking the
sequences $X$, $Y$, and $Z$ from occurring together again in the future and return control to the SAT solver.
\item If $S$ contains four sequences then we know every entry of $A$, $B$, $C$, and~$D$.
If the relationship in Theorem~\ref{thm:psd} holds then record $A$, $B$, $C$, and~$D$ as rows
that define good matrices.
Whether or not the sequences define good matrices learn a clause
blocking these sequences from occurring together again in the future.
\end{enumerate}

If only a single instance of good matrices is desired then one can
stop searching in Step~6 if a satisfying $A$, $B$, $C$, $D$ is found.
However, we learn a blocking clause and continue the search 
because we want to provide an exhaustive search.

The programmatic approach was essential to our algorithm because it allowed
us to easily and efficiently apply Corollary~\ref{cor:psd} which would
otherwise be difficult to apply in a SAT instance.
Corollary~\ref{cor:psd} was also extremely effective;
the SAT solver was usually able to learn a clause in Step~3
when enough variables were assigned.
This enormously cut down the search
space---the SAT solver could often learn conflicts with
just~$n$ variables instead of conflicts with all $4n$ variables
and in practice this appeared to give us an exponential speedup in $n$.

However, using this method alone was not sufficient for us
to derive results beyond order $21$
and a more efficient enumeration method (using the
programmatic encoding as a subroutine)
was required to scale to order $69$.
We describe this method in the next section.

\section{Enumeration algorithm}\label{sec:method}

Although SAT solvers are useful tools for searching large spaces
they have their limits and the main reason that we were not
able to derive any new results only using SAT solvers is 
because the search space is too large, even using the
programmatic filtering encoding.

One way that the performance of SAT solvers can be improved
on instances with large search spaces is by splitting the
instance into many smaller instances of approximately
equal difficulty.   The cube-and-conquer paradigm~\cite{heule2018cube}
does this and has achieved some impressive successes such as solving the
Boolean Pythagorean triples problem~\cite{heule2016solving} and determining the
fifth Schur number~\cite{heule2018schur}.

Our enumeration algorithm will use a similar paradigm to
cube-and-conquer except that instead of splitting the search
space into \emph{cubes} we split the search space into
\emph{compressions}.  Not only does splitting via
compressions increase the performance of the SAT solver
it also allows for easy parallelization because
instances generated in this way are independent.
Additionally, we use Theorem~\ref{thm:compress}
to apply filtering to
the compressions which also has the effect of
further decreasing the size of the search space.

We now describe our algorithm in four steps
followed by some postprocessing.  Given an odd order~$n$
divisible by~$3$ our algorithm enumerates
all inequivalent circulant good matrices of order~$n$
with positive first entries.
Each step is designed to search through a larger subspace
than the previous step: step~1 finds every possible
rowsum of a good matrix, step~2 finds every possible
$3$-compressed row of a good matrix, step~3 finds every
possible $3$-compressed quadruple of good matrices,
and step~4 finds all good matrices up to equivalence.

\subsection{Step 1: Find possible rowsums}

If $A$, $B$, $C$, $D$ are good matrices then
from Corollary~\ref{cor:dio} and Lemma~\ref{lem:sign}
the rowsums of the matrices $B$, $C$, and~$D$
must satisfy the quadratic Diophantine system
\begin{gather*}
x^2 + y^2 + z^2 = 4n - 1 \\
x \equiv y \equiv z \equiv n \pmod{4}
\end{gather*}
where $x$, $y$, and $z$ are the rowsums of $B$, $C$, and~$D$.

Many computer algebra systems contain number theoretic functions
that allow this Diophantine system to be easily solved.
For example, the \textsc{Mathematica} function
\texttt{PowersRepresentations} or the
\textsc{Maple} script \textsc{nsoks}~\cite{nsoks}.

\subsection{Step 2: Find possible compressed rows}

In this step we will generate sets $\Ssk$ and $\Ssy$ where
$\Ssk$ will contain all possible $3$-compressions of skew
defining rows and $\Ssy$ will contain
all possible $3$-compressions of symmetric defining rows.
These sets will be generated by an essentially
brute-force enumeration
through all skew or symmetric
$\{\pm1\}$-sequences of length $n$, using Corollaries~\ref{cor:psd}
and~\ref{cor:dio} to filter sequences which
cannot be a defining row.

Let $d\coloneqq\floor{n/2}$.  In detail, we do the following:

\begin{enumerate}
\item For all $X\in\{\pm1\}^{d}$:
\begin{enumerate}
\item Let $A$ be the skew sequence whose first entry is~$1$
and next $d$ entries are given by $X$, i.e.,
$A\coloneqq(1,x_0,\dotsc,x_{d-1},-x_{d-1},\dotsc,-x_0)$.
\item If $\PSD_A(k)\leq4n$ for all $k$ then
add the $3$-compression of $A$ to $\Ssk$.
\item Let $B$ be the symmetric sequence whose first entry is~$1$
and next $d$ entries are given by $X$, i.e.,
$B\coloneqq(1,x_0,\dotsc,x_{d-1},x_{d-1},\dotsc,x_0)$.
\item If $\PSD_B(k)\leq4n$ for all $k$ and $\row(B)$ is one
of the solutions $x$, $y$, $z$ from step~1 then
add the $3$-compression of $B$ to $\Ssy$.
\end{enumerate}
\end{enumerate}

\subsection{Step 3: Find possible compressed quadruples}

In step~2 we generated all possible 3-compressions
of the defining rows of good matrices.  However,
most quadruples of possible sequences found in
step~2 will violate the relationship
in Theorem~\ref{thm:compress}
and can therefore be filtered.
In this step we identify all quadruples of 3-compressions
which can't be filtered
and store them in a set $\Squ$.
We do this by examining all $(A',B',C',D')$ in $\Ssk\times\Ssy^3$
and adding the quadruples that satisfy the relationship in Theorem~\ref{thm:compress}
to $\Squ$.

\subsection{Step 4: Uncompress via programmatic SAT}

In step~3 we generated all $3$-compressed quadruples $(A',B',C',D')$
that satisfy the relationship in Theorem~\ref{thm:compress}.  Given such a quadruple we
need to determine if it is possible to \emph{uncompress} the quadruple.
In other words, we want to find all quadruples $(A,B,C,D)$ of defining
rows of good matrices (if any) whose $3$-compression is $(A',B',C',D')$.
To do this we formulate the uncompression problem as a programmatic SAT
problem.

We already discussed our programmatic encoding of the constraint
saying that $A$, $B$, $C$, $D$ define rows of good matrices
of order $n=3m$.
The remaining constraints can easily be expressed in the standard format
accepted by SAT solvers (conjunctive normal form).
We now
discuss how to encode the constraint that they are $3$-compressions
of $A'$, $B'$, $C'$, and~$D'$.
(For simplicity we only consider the compression constraints of
$A'=(a'_0,\dotsc,a'_{m-1})$
but the other constraints are analogous.)

Consider the entry $a'_k=a_k+a_{k+m}+a_{k+2m}$.
There are four possible cases:

\begin{enumerate}
\item $a'_k=3$.  In this case $a_k=a_{k+m}=a_{k+2m}=1$ which
is encoded as the three unit clauses
\[ a_k, \quad a_{k+m}, \quad a_{k+2m} . \]
\item $a'_k=1$.  In this case two of $\{a_k,a_{k+m},a_{k+2m}\}$
are true and one is false.  This is encoded as the four clauses
\begin{gather*}
\lnot a_k \lor \lnot a_{k+m} \lor \lnot a_{k+2m} , \\
a_k \lor a_{k+m} , \quad
a_k \lor a_{k+2m} , \quad
a_{k+m} \lor a_{k+2m} .
\end{gather*}
\item $a'_k=-1$.  In this case one of $\{a_k,a_{k+m},a_{k+2m}\}$
is true and two are false.  This is encoded as the four clauses
\begin{gather*}
a_k \lor a_{k+m} \lor a_{k+2m} , \\
\lnot a_k \lor \lnot a_{k+m} , \quad\!\!
\lnot a_k \lor \lnot a_{k+2m} , \quad\!\!
\lnot a_{k+m} \lor \lnot a_{k+2m} .
\end{gather*}
\item $a'_k=-3$.  In this case $a_k=a_{k+m}=a_{k+2m}=-1$ which
is encoded as the three unit clauses
\[ \lnot a_k, \quad\!\! \lnot a_{k+m}, \quad\!\! \lnot a_{k+2m} . \]
\end{enumerate}

Finally, we discuss how to encode Corollary~\ref{cor:parity}.
The simplest case of Corollary~\ref{cor:parity} tells us that an even number of $\{b_m,c_m,d_m\}$
are true.  This is encoded as the four clauses
\begin{gather*}
b_m \lor c_m \lor d_m , \quad
\lnot b_m \lor \lnot c_m \lor d_m , \\
\lnot b_m \lor c_m \lor \lnot d_m , \quad
b_m \lor \lnot c_m \lor \lnot d_m .
\end{gather*}
The general case of Corollary~\ref{cor:parity}
is an XOR constraint of length~$5$.
While SAT solvers are not good with arbitrary XOR constraints
we found that these constraints were short enough that they could be effectively encoded
as the sixteen clauses with an even number of negated literals, i.e.,
\[ a_k\lor a_{2k}\lor b_k\lor c_k\lor d_k, \dotsc, a_k\lor \lnot a_{2k}\lor \lnot b_k\lor \lnot c_k\lor \lnot d_k . \]

\subsection{Postprocessing: Remove equivalent good matrices}

After all the SAT instances generated in step~4 are solved we will have
a list of all circulant good matrices of order~$n$ up to equivalence.  However, some of
them may be equivalent to each other since we have not encoded all of the
equivalence operations of Section~\ref{sec:equiv}.  In order
to provide a list which contains only inequivalent good matrices
it is necessary to test the matrices for equivalence and remove
any that are found to be equivalent to one which has previously
been found.

To do this, we define a canonical representative of each equivalence
class of defining rows of good matrices.
This way, to check if two good matrices are
equivalent we compute the canonical representative of each and check
if they are equal.

First, consider the first two equivalence operations
(reorder and negate).  Given a quadruple of defining rows $(A,B,C,D)$
we apply the negation operation
to set the first entry of each defining row to be positive.  Then
we apply the reorder operation to sort $B$, $C$, $D$ in lexicographically
increasing order.  This gives us a canonical representative of each equivalence
class formed by the first two equivalence operations which we denote
by $M_{A,B,C,D}$.

Next, consider the third equivalence operation (permute indices).
Let $\sigma$ be an automorphism of $\Z_n$ and let $\sigma$ act on
$(A,B,C,D)$ by permuting the indices of the defining rows.
This operation commutes with the first two operations, so the lexicographic
minimum of the set
\[ S \coloneqq \{ \, M_{\sigma(A,B,C,D)} : \sigma \in \Aut(\Z_n) \, \} \]
is the lexicographic minimum of all quadruples equivalent to $(A,B,C,D)$
and is therefore canonical.  We compute this by explicitly generating
all elements of $S$ and selecting the lexicographic minimum.

\subsection{Optimizations}

For the benefit of those who would like to implement our algorithm we
now discuss some optimizations that we found useful.

In the programmatic encoding and in step~2 of our enumeration
algorithm we need to check if there exists an integer $k$ such that
$\PSD_X(k)>4n$ for a given sequence $X$.  We only need to check $k$
with $0\leq k<n/2$ because the symmetries of the Fourier
transform on real input $X$ imply that $\PSD_X(k)=\PSD_X(\pm k\bmod n)$.
In fact, checking $k=0$ is also unnecessary because $\PSD_X(0)=\row(X)^2$
and step~1 will filter any sequences $X$ with $\row(X)^2>4n$.  Also,
when checking the PSD criterion for each $k$ we sort the elements of $S$ so that
the sequences with the largest PSD values appear first, making it easier to learn short clauses.

In step~3, the matching procedure can be done very efficiently by a string
sorting technique similar to the method used in~\cite{kotsireas2010efficient}.
First we enumerate all $(A',B')\in\Ssk\times\Ssy$ that satisfy
$\PSD_{A'}(k)+\PSD_{B'}(k)\leq 4n$ for all $k$ and all $(C',D')\in\Ssy^2$ that satisfy
$\PSD_{C'}(k)+\PSD_{D'}(k)\leq 4n$ for all $k$.
At this point we form one list that contains strings that contain the values
of $\PSD_{A'}(k)+\PSD_{B'}(k)$ and another list that contains strings that contain
the values of $4n-(\PSD_{C'}(k)+\PSD_{D'}(k))$
and look for strings which occur in both lists.
This can be done by sorting the two lists and doing a linear scan through
the lists to find duplicates.
The fact that PSDs are real numbers can make this tricky, but we can deal
solely with integers by applying the inverse Fourier transform to the PSD values;
this produces integer values known as PAF values and the relationship given in
Theorem~\ref{thm:compress} becomes (for $0<k<n$)
\[ \PAF_{A'}(k)+\PAF_{B'}(k)+\PAF_{C'}(k)+\PAF_{D'}(k) = 0 .
\]
In fact, since this step was not a bottleneck
we computed the PAF values using the slower but
more straightforward relationship $\PAF_X(k)=\sum_{j=0}^{n-1}x_jx_{j+k\bmod n}$.

In step~4, before calling the SAT solver we partition the compressed
quadruples into equivalence classes using the reorder and automorphism
equivalence operations from
Section~\ref{sec:background} and keep one quadruple from
each equivalence class.  If any satisfiable SAT instances are removed
by this process they will have equivalent solutions (also with positive first entries)
in the instance that was kept.

\section{Results}\label{sec:results}

In this section we discuss our implementation of our enumeration
algorithm, the software and hardware we used to run it,
and the results that we achieved.  Our code is available
from our website
\href{https://uwaterloo.ca/mathcheck/}{\nolinkurl{uwaterloo.ca/mathcheck}}.

Step~1 of our algorithm was done using \textsc{nsoks} in \textsc{Maple}~\cite{nsoks}.
Steps~2 and~3 were done with custom C++ code and the string sorting was done
with the GNU core utility \textsc{sort}.  Step~4 was done with the programmatic
SAT solver \textsc{MapleSAT}~\cite{aaai2016} and the PSD values were computed
using the library \textsc{FFTW}~\cite{frigo2005design}.  Because \textsc{FFTW}
uses floating point values which are inherently inaccurate all comparisons
were checked to a tolerance of $\epsilon=10^{-2}$ which is small but larger
than the precision of the FFT used.
The computations were run on a cluster of 64-bit Opteron 2.2GHz and Xeon 2.6GHz processors
running \mbox{CentOS}~6.9 and using 500MB of memory.
By far the most computationally expensive step of our algorithm was running the SAT solver.
This accounted for 99.9\% of the total running time,
underscoring the importance of using a state-of-the-art SAT solver such
as \textsc{MapleSAT}.
This step was parallelized across 250 cores and
completed in about two weeks.
Despite step~4 being the bottleneck the other steps were found to be
essential could not be removed
without significantly increasing the computation time.

The timings for running our implementation 
in each odd order divisible by $3$
up to order $70$ are given in Table~\ref{tbl:results}.
This table also contains the number of SAT instances generated in each order
and the number of inequivalent good matrices found (denoted by $\#G_n$).
\begin{table}
\begin{center}
\begin{tabular}{c@{\qquad}c@{\qquad}c@{\qquad}c}
$n$        & Time (h)   & \# inst.   & $\#G_n$    \\ 
3          & 0.00       & 1          & 1          \\ 
9          & 0.00       & 2          & 1          \\ 
15         & 0.00       & 11         & 11         \\ 
21         & 0.00       & 39         & 10         \\ 
27         & 0.00       & 186        & 13         \\ 
33         & 0.01       & 840        & 15         \\ 
39         & 0.07       & 1934       & 5          \\ 
45         & 1.91       & 19205      & 4          \\ 
51         & 11.35      & 23611      & 0          \\ 
57         & 233.56     & 102402     & 1          \\ 
63         & 11115.70   & 808642     & 0          \\ 
69         & 72366.85   & 918940     & 0          
\end{tabular}
\end{center}
\caption{The running time in hours, number
of SAT instances used, and number of inequivalent
good matrices found ($\#G_n$) in each odd order $n<70$ divisible by $3$.}\label{tbl:results}
\end{table}
Our exhaustive search in order~$27$ produced 13 sets of inequivalent good matrices, though
Szekeres' classic paper~\cite{szekeres1988note} contains only 12 sets of such matrices
and Seberry reports that this was checked by at least one other researcher~\cite{seberry1999}.
Comparing the matrices generated using our method with those in his paper shows that
Szekeres missed a set of good matrices.
Furthermore, we found a new set of good matrices of order $57$;
the two new sets of good matrices produced by our method are given in Figure~\ref{fig:newgood}.

\newcommand\p{$\texttt{+}$}
\newcommand\m{$\texttt{-}$}
\begin{figure*}
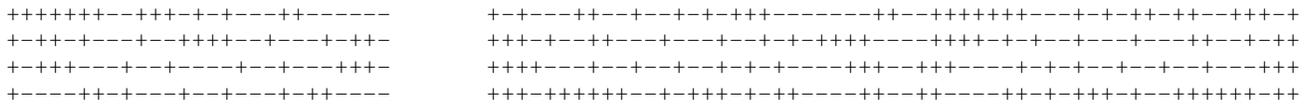
\centering\small
\p\p\p\p\p\p\p\m\m\p\p\p\m\p\m\p\m\m\m\p\p\m\m\m\m\m\m \qquad\qquad \p\m\p\m\m\m\p\p\m\m\p\m\m\p\m\p\m\p\p\p\m\m\m\m\m\m\m\p\p\m\m\p\p\p\p\p\p\p\m\m\m\p\m\p\m\p\p\m\p\p\m\m\p\p\p\m\p \\
\p\m\p\p\m\p\m\m\m\p\m\m\p\p\p\p\m\m\p\m\m\m\p\m\p\p\m \qquad\qquad \p\p\p\m\p\m\m\p\p\m\m\m\p\m\m\m\p\m\m\p\m\p\m\p\p\p\p\m\m\m\m\p\p\p\p\m\p\m\p\m\m\p\m\m\m\p\m\m\m\p\p\m\m\p\m\p\p \\
\p\m\p\p\p\m\m\m\p\m\m\p\m\m\m\m\p\m\m\p\m\m\m\p\p\p\m \qquad\qquad \p\p\p\p\m\m\m\p\m\m\p\m\m\p\m\m\p\m\p\m\p\m\m\m\m\p\p\p\m\m\p\p\p\m\m\m\m\p\m\p\m\p\m\m\p\m\m\p\m\m\p\m\m\m\p\p\p \\
\p\m\m\m\m\p\p\m\p\m\m\m\p\m\m\p\m\m\m\p\m\p\p\m\m\m\m \qquad\qquad \p\p\p\m\p\p\p\p\p\p\m\m\p\m\p\p\p\m\p\m\p\p\m\m\m\m\p\p\m\m\p\p\m\m\m\m\p\p\m\p\m\p\p\p\m\p\m\m\p\p\p\p\p\p\m\p\p
\caption{The defining rows of new good matrices of order 27 and order 57
where {\p} encodes $1$ and {\m} encodes $-1$.}\label{fig:newgood}
\end{figure*}

\section{Conclusion}\label{sec:conclusion}

In this paper we have demonstrated the applicability of the SAT+CAS
and programmatic SAT paradigms to the problem of finding good matrices
in combinatorics.  This problem has been well-studied for almost 50 years
and the algorithm we've developed using SAT solvers coupled with
computer algebra systems and mathematical libraries is a
practical method of enumerating circulant good matrices
(at least those with orders divisible by $3$).
This is evidenced by the large gap between the prior state-of-the-art
and the results of this paper---%
prior to this year the largest order enumerated was $39$
and the orders up to $49$ were only enumerated this year, while
our algorithm was able to enumerate circulant good matrices of order $69$.
Since the search space grows exponentially in the order
(approximately like $4^n$) this is a much larger
search space than the order $49$ search space.

Additionally, we've demonstrated the effectiveness of our algorithm by
constructing two new sets of good matrices, including one that escaped
detection in a 1988 search~\cite{szekeres1988note}
and a double-check by Koukouvinos in 1999~\cite{seberry1999}.  It is surprising that
this set of good matrices was overlooked for so long though the fact that the
double-check also failed to produce it likely dissuaded researchers
from spending more computational resources to triple-check the
same search in order~$27$.

Unfortunately, the fact that there are no known certificates for
an exhaustive search means that it is difficult to verify
that a search completed successfully.  In our case, we relied on
multiple pieces of software including
\textsc{Maple}, \textsc{MapleSAT}, \textsc{FFTW},
\textsc{GNU sort}, and custom-written C++ code.
A bug in any one of these programs could cause a
good matrix to be overlooked.
The fact that our code confirms (and even corrects) the results of
previous searches gives some assurance that it is working as intended.
However, verification of our nonexistence results from multiple
independent sources would be ideal.

\section*{Acknowledgments}

We thank the reviewers for their detailed comments which
improved this paper's clarity.  The computations
were made possible by the high-performance computer clusters
administered by Compute Canada and SHARCNET.

\bibliography{good}
\bibliographystyle{aaai}

\section*{Appendix: Proofs}

\renewcommand{\thetheorem}{\ref{thm:psd}}
\begin{theorem}
Let $A$, $B$, $C$, $D$ be sequences of length $n$
with $A$ skew and $B$, $C$, $D$ symmetric.
Then $A$, $B$, $C$, $D$ are the defining rows of a set
of good matrices if and only if
for all integers $k$ we have 
\[ \PSD_A(k) + \PSD_B(k) + \PSD_C(k) + \PSD_D(k) = 4n . \]
Here $\PSD_X(k) \coloneqq \abs[\big]{\sum_{j=0}^{n-1}x_j\omega^{jk}}^2$
with $\omega\coloneqq\exp(2\pi i/n)$ is
the power spectral density of $X=[x_0,\dotsc,x_{n-1}]$.
\end{theorem}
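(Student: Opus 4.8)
The plan is to exploit the fact that every circulant matrix of order~$n$ is diagonalized by the common Fourier eigenbasis $v_k\coloneqq(\omega^{jk})_{j=0}^{n-1}$, $0\leq k<n$, so that the matrix identity in condition~(c) of Definition~\ref{def:good} can be compared with $4nI$ one eigenvalue at a time. First I would dispose of the structural conditions: forming a circulant matrix from each sequence makes condition~(a$'$) automatic, and one checks directly that a skew sequence ($a_i=-a_{n-i}$) produces a skew matrix while a symmetric sequence ($b_i=b_{n-i}$) produces a symmetric matrix, giving condition~(b). Thus, under the convention~(a$'$), the sequences are the defining rows of a set of good matrices if and only if condition~(c) holds, namely $AA^T+B^2+C^2+D^2=4nI$; and since $B$, $C$, $D$ are symmetric this is the same as $AA^T+BB^T+CC^T+DD^T=4nI$, which is the form best suited to Fourier analysis.

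Next I would record the eigenvalues. A circulant matrix $X$ with first row $(x_0,\dotsc,x_{n-1})$ satisfies $Xv_k=\lambda_k^X\,v_k$ with $\lambda_k^X=\sum_{j=0}^{n-1}x_j\omega^{jk}$, the discrete Fourier transform of $X$. The key computation is that $XX^T$ acts on $v_k$ by $\PSD_X(k)$: transposing a circulant reverses its generating sequence, so $X^T$ is again circulant and, because the entries $x_j$ are real, $X^Tv_k=\overline{\lambda_k^X}\,v_k$; hence $XX^Tv_k=\lambda_k^X\overline{\lambda_k^X}\,v_k=\abs{\lambda_k^X}^2v_k=\PSD_X(k)\,v_k$. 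For symmetric $B$ the eigenvalue $\lambda_k^B$ is real, so $B^2$ and $BB^T$ agree and likewise act by $\PSD_B(k)$. Summing the four contributions, the left-hand side of~(c) acts on $v_k$ by $\PSD_A(k)+\PSD_B(k)+\PSD_C(k)+\PSD_D(k)$, whereas $4nI$ acts by the scalar $4n$ on every $v_k$.

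Finally, since the $v_k$ form a basis of $\mathbb{C}^n$, two matrices that are diagonal in this basis are equal if and only if their eigenvalues coincide on each $v_k$; therefore~(c) is equivalent to $\sum_{X}\PSD_X(k)=4n$ for each $k\in\{0,\dotsc,n-1\}$, and the $n$-periodicity of $\omega^{jk}$ in $k$ upgrades this to all integers $k$, settling both directions of the equivalence at once. I expect the main obstacle to be the transpose bookkeeping in the middle step: one must verify carefully that the first row of $X^T$ is the cyclic reversal of that of $X$, so that the eigenvalue of $X^T$ is the complex conjugate $\overline{\lambda_k^X}$, and consequently that $\PSD_X(k)$ is the eigenvalue of $XX^T$ (and of $B^2$ for symmetric $B$)---and not of $X^2$---which is precisely what lets condition~(c) match the stated power spectral density sum.
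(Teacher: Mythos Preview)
Your argument is correct. You diagonalize each circulant by the common Fourier eigenbasis, identify the eigenvalue of $XX^T$ on $v_k$ as $\PSD_X(k)$ via the transpose/conjugation check, and then read off condition~(c) eigenvalue by eigenvalue; the handling of conditions~(a$'$) and~(b) and the periodicity remark are fine.

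The paper takes a different, much terser route: it first rewrites condition~(c) in terms of periodic autocorrelation functions, namely $\PAF_A+\PAF_B+\PAF_C+\PAF_D=[4n,0,\dotsc,0]$ (this amounts to the observation that the first row of $XX^T$ for circulant $X$ records the values $\PAF_X(k)$), and then invokes an external result (Theorem~2 of \DJ okovi\'c--Kotsireas on compression) which converts such a PAF identity into the stated PSD identity via the discrete Wiener--Khinchin relation $\DFT(\PAF_X)=\PSD_X$. Underneath, both proofs are the same piece of Fourier analysis---your eigenvalue computation \emph{is} Wiener--Khinchin in disguise---but your version is self-contained and makes the role of the symmetry of $B,C,D$ (so $B^2=BB^T$) explicit, whereas the paper's version outsources the analytic step to a citation and gains brevity at the cost of self-containment.
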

\begin{proof}
Using the notation of~\cite{djokovic2015compression}
an alternative way of stating condition~(c) in our
definition of good matrices is that
\[ \PAF_A + \PAF_B + \PAF_C + \PAF_D = [4n,0,\dotsc,0] . \]
Then \cite[Theorem~2]{djokovic2015compression} gives the desired result.
\end{proof}

\renewcommand{\thetheorem}{\ref{cor:psd}}
\begin{corollary}
Let $A$, $B$, $C$, $D$ be the defining rows of a set of good matrices
of order~$n$.
If\/ $S$ is a subset of\/ $\{A,B,C,D\}$ then
\[ \sum_{X\in S} \PSD_X(k) \leq 4n \]
for all integers $k$.
\end{corollary}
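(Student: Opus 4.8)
The plan is to obtain the inequality as an immediate consequence of the \emph{equality} in Theorem~\ref{thm:psd}, using nothing more than the fact that each power spectral density value is nonnegative. The essential observation is that for any sequence $X=[x_0,\dotsc,x_{n-1}]$ and any integer $k$, the quantity
\[ \PSD_X(k) = \abs[\Big]{\sum_{j=0}^{n-1} x_j \omega^{jk}}^2 \]
is by definition the squared modulus of a complex number, and hence $\PSD_X(k) \geq 0$.

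With this in hand, I would argue as follows. Since $A$, $B$, $C$, $D$ are the defining rows of a set of good matrices by hypothesis, Theorem~\ref{thm:psd} applies and gives, for every integer $k$,
\[ \PSD_A(k) + \PSD_B(k) + \PSD_C(k) + \PSD_D(k) = 4n . \]
Now fix an arbitrary subset $S \subseteq \{A,B,C,D\}$ and let $S^{c} = \{A,B,C,D\} \setminus S$ be its complement within the quadruple. Splitting the full sum according to this partition yields
\[ \sum_{X\in S} \PSD_X(k) + \sum_{X\in S^{c}} \PSD_X(k) = 4n . \]
Because every summand in the second sum is nonnegative, that sum is itself nonnegative, and dropping it can only decrease the left-hand side. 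This immediately gives the desired bound $\sum_{X\in S}\PSD_X(k) \leq 4n$ for all integers $k$.

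There is no genuine obstacle here: the corollary is a one-line consequence of the theorem, and the only point requiring any care is the nonnegativity of $\PSD_X(k)$, which is transparent from its definition as an absolute value squared. The corollary is useful in practice precisely because it converts the rigid equality of Theorem~\ref{thm:psd}, which requires knowing all four sequences simultaneously, into a filtering condition that can be checked on any single sequence or partial collection during the search.
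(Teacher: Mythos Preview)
Your proof is correct and essentially identical to the paper's own argument: apply Theorem~\ref{thm:psd} to obtain the equality $\sum_{X\in\{A,B,C,D\}}\PSD_X(k)=4n$, then drop the nonnegative PSD terms corresponding to sequences not in $S$. The only difference is that you spell out why $\PSD_X(k)\geq 0$ (as a squared modulus), which the paper simply asserts.
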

\begin{proof}
By Theorem~\ref{thm:psd} we have
\[ \sum_{X=A,B,C,D} \PSD_X(k) = 4n . \]
The desired inequality follows from the fact that PSD values
are nonnegative.
\end{proof}

\renewcommand{\thetheorem}{\ref{cor:dio}}
\begin{corollary}
Let $A$, $B$, $C$, $D$ be the defining rows of a set of good matrices
of odd order~$n$.
Then
\[ \row(B)^2 + \row(C)^2 + \row(D)^2 = 4n-1 \]
where $\row(X)$ denotes the sum of entries in $X$.
\end{corollary}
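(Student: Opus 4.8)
The plan is to specialize the identity of Theorem~\ref{thm:psd} to the single value $k=0$ and then read off the rowsum of the skew sequence $A$ separately. First I would observe that at $k=0$ every factor $\omega^{jk}$ equals $1$, so for any sequence $X$ we have $\PSD_X(0)=\abs[\big]{\sum_{j=0}^{n-1}x_j}^2=\row(X)^2$. Substituting $k=0$ into the identity of Theorem~\ref{thm:psd} therefore yields
\[ \row(A)^2+\row(B)^2+\row(C)^2+\row(D)^2=4n . \]

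The one remaining ingredient is the value of $\row(A)$, and this is where the skewness of $A$ together with the oddness of $n$ enter. Since $a_0=1$ and $a_i=-a_{n-i}$ for $1\leq i<n/2$, I would group the entries $a_1,\dotsc,a_{n-1}$ into the pairs $\{a_i,a_{n-i}\}$; because $n$ is odd there is no unpaired middle index, and each pair cancels since $a_i+a_{n-i}=0$. Hence $\row(A)=a_0=1$ and $\row(A)^2=1$. Subtracting this from the displayed equation gives $\row(B)^2+\row(C)^2+\row(D)^2=4n-1$, which is the claim.

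I do not expect any genuine obstacle here: the whole argument is a one-line specialization of Theorem~\ref{thm:psd} followed by the elementary evaluation $\row(A)=1$. The only place a hypothesis is truly used is the parity of $n$. It is exactly the oddness of $n$ that forces the skew sequence $A$ to have no self-paired entry (an even $n$ would demand $a_{n/2}=-a_{n/2}=0$, which is impossible for a $\pm1$ entry) and hence to have rowsum precisely $1$. If one wished to stay closer to the structure already established, one could instead invoke Corollary~\ref{cor:psd} with $S=\{A\}$ at $k=0$ to bound $\row(A)^2\leq 4n$, but the direct pairing argument pins down $\row(A)$ exactly and is what makes the right-hand side collapse to $4n-1$.
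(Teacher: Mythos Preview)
Your proposal is correct and matches the paper's own proof essentially verbatim: set $k=0$ in Theorem~\ref{thm:psd}, use $\PSD_X(0)=\row(X)^2$, and invoke skewness to get $\row(A)=1$. The paper states $\row(A)=1$ in one phrase while you spell out the pairing argument, but the approach is identical.
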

\begin{proof}
Since $\PSD_X(0)=\row(X)^2$ and $\row(A)=1$ (as $A$ is skew) the desired
equality follows by setting $k=0$ in Theorem~\ref{thm:psd}.
\end{proof}

\renewcommand{\thetheorem}{\ref{lem:sign}}
\begin{lemma}
Let $(a_0,\dotsc,a_{n-1})$, $(b_0,\dotsc,b_{n-1})$, $(c_0,\dotsc,c_{n-1})$, $(d_0,\dotsc,d_{n-1})$ be the defining rows
of a set of good matrices $A$, $B$, $C$, $D$ of odd order $n$ with $b_0=c_0=d_0=1$.  Then
\[ \row(B) \equiv \row(C) \equiv \row(D) \equiv n \pmod{4} . \]
\end{lemma}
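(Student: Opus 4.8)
The plan is to exploit the symmetry of the sequence $B$ to collapse $\row(B)$ into the fixed entry $b_0=1$ plus twice a sum over half the indices, and then to carry out a short parity computation modulo~$4$. The same computation will apply unchanged to $C$ and $D$, so it suffices to treat $B$.

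First I would record the consequences of symmetry. Writing $n=2d+1$ with $d=\lfloor n/2\rfloor$, the nonzero indices $1,\dotsc,n-1$ partition into the $d$ pairs $\{i,\,n-i\}$ for $1\le i\le d$, and the constraint $b_i=b_{n-i}$ forces the two entries within each pair to be equal. Combining this with $b_0=1$ gives
\[ \row(B)=1+2\sum_{i=1}^{d}b_i . \]

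Next I would pass to congruences modulo~$4$. Each $b_i\in\{\pm1\}$ is odd, so the sum $\sum_{i=1}^{d}b_i$ of $d$ odd terms satisfies $\sum_{i=1}^{d}b_i\equiv d\pmod 2$; doubling turns this mod-$2$ statement into $2\sum_{i=1}^{d}b_i\equiv 2d\pmod 4$. Substituting into the displayed identity and using $2d=n-1$ yields $\row(B)\equiv 1+2d=n\pmod 4$. Because $C$ and $D$ are likewise symmetric with first entry~$1$, the identical argument gives $\row(C)\equiv\row(D)\equiv n\pmod 4$, completing the proof.

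There is no substantial obstacle here; the result is essentially a bookkeeping exercise, which is presumably why the statement remarks that this lemma is often left unstated. The only point requiring care is the mod-$4$ step: the symmetry by itself controls $\sum_{i=1}^{d}b_i$ only modulo~$2$, and it is precisely the factor of~$2$ sitting in front of the sum that upgrades this to the desired congruence modulo~$4$. One should also note that the clean halving of the indices into $d$ disjoint pairs is valid exactly because $n$ is odd, so that no index is its own partner; this is where the oddness hypothesis enters.
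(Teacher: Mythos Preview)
Your proof is correct and follows essentially the same approach as the paper's: both use the symmetry $b_i=b_{n-i}$ to write $\row(B)=1+2\sum_{i=1}^{(n-1)/2}b_i$, then observe that a sum of $(n-1)/2$ odd numbers is congruent to $(n-1)/2\pmod 2$, so doubling gives $n-1\pmod 4$ and the result follows. Your additional remarks on where the oddness of $n$ enters and why the factor of~$2$ is what upgrades the mod-$2$ control to mod-$4$ are accurate and match the paper's reasoning.
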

\begin{proof}
When $X$ is symmetric and $x_0=1$ we have
\[ \row(X)=1+2(x_1+\dotsb+x_{(n-1)/2}) . \]
Since each $x_k$ is $\pm1$ and congruent to $1\pmod{2}$ the
sum in parenthesis above is congruent to $(n-1)/2\pmod{2}$
and twice the sum is congruent to $n-1\pmod{4}$.
Thus $\row(X)\equiv n\pmod{4}$.
\end{proof}

\renewcommand{\thetheorem}{\ref{thm:product}}
\begin{theorem}
Let $(a_0,\dotsc,a_{n-1})$, $(b_0,\dotsc,b_{n-1})$, $(c_0,\dotsc,c_{n-1})$, $(d_0,\dotsc,d_{n-1})$ be the defining rows
of a set of good matrices $A$, $B$, $C$, $D$ of odd order $n$.  Then $a_k b_k c_k d_k = -a_{2k\bmod n}b_0c_0d_0$ for all\/ $1\leq k<n$.
\end{theorem}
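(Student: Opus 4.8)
The plan is to recast everything in the group ring $\Z[\Z_n]\cong\Z[t]/(t^n-1)$, representing each defining row $X$ by the polynomial $X(t)=\sum_j x_jt^j$. In this language Theorem~\ref{thm:psd} (equivalently condition~(c)) is the single identity $A(t)A(t^{-1})+B(t)B(t^{-1})+C(t)C(t^{-1})+D(t)D(t^{-1})=4n$, and since $B,C,D$ are symmetric we have $X(t^{-1})=X(t)$, so the last three products are genuine squares $B(t)^2,C(t)^2,D(t)^2$; the skew condition on $A$ supplies the companion relation $A(t)+A(t^{-1})=2$. The goal is to extract from this polynomial identity a statement about individual coefficients, and the bridge to $\pm1$ products is the substitution $X(t)=N(t)-2\bar X(t)$, where $N(t)=\sum_jt^j$ and $\bar X(t)=\sum_j\bar x_jt^j$ has $\{0,1\}$ coefficients. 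Under the isomorphism $x\mapsto\bar x$ the desired conclusion $a_kb_kc_kd_k=-a_{2k}b_0c_0d_0$ is exactly the $\Z_2$ statement $\bar a_k+\bar a_{2k}+\bar b_k+\bar c_k+\bar d_k=1+\bar b_0+\bar c_0+\bar d_0$.

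First I would substitute $X(t)=N(t)-2\bar X(t)$ into the group-ring identity and expand. Using the absorbing property $N(t)\,P(t)=P(1)N(t)$ (so that $N(t)^2=nN(t)$ and $N(t)\bar X(t)=r_XN(t)$, where $r_X:=\sum_j\bar x_j$), every surviving term is divisible by $4$; dividing by $4$ leaves an identity over $\Z$ in the $\{0,1\}$-polynomials, namely $\bar A(t)\bar A(t^{-1})+\bar B(t)^2+\bar C(t)^2+\bar D(t)^2=n\,t^0-(n-r_A-r_B-r_C-r_D)N(t)$. Next I would eliminate $\bar A(t^{-1})$: the skew relation rewrites as $\bar A(t)+\bar A(t^{-1})=N(t)-t^0$, whence $\bar A(t)\bar A(t^{-1})=r_AN(t)-\bar A(t)-\bar A(t)^2$, turning the awkward autocorrelation term into something linear-plus-square in $\bar A(t)$ alone.

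The payoff comes from reducing mod~$2$. Over $\mathbb{F}_2$ the Frobenius endomorphism gives $\bar X(t)^2=\bar X(t^2)$, linearizing all four squares, and the whole identity collapses to $\bar A(t)+\bar A(t^2)+\bar B(t^2)+\bar C(t^2)+\bar D(t^2)=t^0+\mu\,N(t)$ with $\mu:=(n-r_B-r_C-r_D)\bmod 2$. Reading off the coefficient of $t^{2k}$ (legitimate since $k\mapsto 2k$ permutes the nonzero residues as $n$ is odd) yields $\bar a_{2k}+\bar a_k+\bar b_k+\bar c_k+\bar d_k=\mu$ for every $k\neq0$. It remains to identify the constant: because each of $B,C,D$ is symmetric of odd length its $-1$'s pair off except at index~$0$, so $r_X\equiv\bar x_0\pmod 2$, and with $n$ odd this gives $\mu=1+\bar b_0+\bar c_0+\bar d_0$, matching the target exactly; translating back through $x\mapsto\bar x$ proves the theorem, with no normalization of $b_0,c_0,d_0$ required. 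The main obstacle is the skew autocorrelation term $\bar A(t)\bar A(t^{-1})$, which is a priori quadratic in the unknowns $\bar a_j$; the crux is that the skew relation together with the characteristic-$2$ Frobenius conspire to reduce it to the linear combination $\bar A(t)+\bar A(t^2)$, after which reading the correct coefficient is routine.
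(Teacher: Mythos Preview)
Your argument is correct and, at its core, the same as the paper's: pass from $\pm1$ entries to $\{0,1\}$ entries, obtain a quadratic identity in those, reduce mod~$2$ so the Frobenius/freshman's-dream makes every square linear, and read off a single coefficient. The paper carries this out in the matrix algebra of circulants, writing $P_X=(X+J)/2=\sum_{x_i=1}T^i$ and invoking the identity $-P_A^2+P_A+P_B^2+P_C^2+P_D^2=-(n-p_B-p_C-p_D)J+nI$ from \cite{gene2015}; you carry it out in the isomorphic group ring $\Z[t]/(t^n-1)$ with the complementary $\{0,1\}$-polynomials $\bar X(t)$, and you derive the analogous identity directly from the PAF relation and the skew condition $\bar A(t)+\bar A(t^{-1})=N(t)-1$ rather than citing it. That self-containedness is a small gain, as is tracking $\bar b_0,\bar c_0,\bar d_0$ throughout instead of normalizing them to~$1$ up front, but structurally the two proofs coincide step for step.
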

\begin{proof}
We may assume that $b_0=c_0=d_0=1$; all sets of good matrices are equivalent to one in which this holds
and the once the theorem has been proven for this class of good matrices the general theorem follows by
applying the negation equivalence operation appropriately.

Let $P_X$ be the matrix whose positive entries are identical to the entries of $X$ and whose entries are $0$ otherwise.
Since $X\in\{A,B,C,D\}$ has $\pm1$ entries we have $P_X = (X+J)/2$ where $J$ is a matrix of all $1$s and $P_X=\sum_{x_i=1}T^i$ where $T$ is the circulant matrix with first row $(0,1,0,\dotsc,0)$.  Let $p_X$ be the number of positive entries in the first row of $X$.  In~\cite[\S3.4]{gene2015} it is shown that
we have
\[ -P_A^2 + P_A + P_B^2 + P_C^2 + P_D^2 = -(n-p_B-p_C-p_D)J + nI . \]
Since $n$ is odd and $p_B$, $p_C$, and $p_D$ are odd by the symmetry of $B$, $C$, and~$D$, we have that
the right side reduces to $I\pmod{2}$.  Since $P_X^2\equiv\sum_{x_i=1}T^{2i}\pmod{2}$ the left side
mod $2$ reduces to
\[ \sum_{a_i=1}(T^{2i}+T^i) + \sum_{b_i=1}T^{2i} + \sum_{c_i=1}T^{2i} + \sum_{d_i=1}T^{2i} . \]
Let $i/2$ denote the unique integer in $\{0,\dotsc,n-1\}$ such that $2(i/2)\equiv i\pmod{n}$ which is defined since $n$ is odd.
Putting together the above, we derive (mod $2$)
\[ \sum_{a_{i/2}=1}T^i + \sum_{a_i=1}T^i + \sum_{b_{i/2}=1}T^i + \sum_{c_{i/2}=1}T^i + \sum_{d_{i/2}=1}T^i \equiv I . \]
Consider the $i$th entry of the initial row of the matrix on the left where $1\leq i<n$.
This entry will be the sum of the $1$s in the set $\{a_{i/2},a_i,b_{i/2},c_{i/2},d_{i/2}\}$.  The right side says that this
entry is $0\pmod{2}$ and therefore an even number of entries in this set are $1$ (and an odd number must be $-1$).
In other words,
\[ a_{i/2}a_ib_{i/2}c_{i/2}d_{i/2} = -1 \]
which a rewriting of the required result with the definition $k=i/2$.
\end{proof}

Let $\bar{x}\coloneqq(1-x)/2$, i.e., the mapping $x\mapsto\bar{x}$ is the group
isomorphism between $\Z_4^*=\{\pm1\}$ and $\Z_2=\{0,1\}$.

\renewcommand{\thetheorem}{\ref{cor:parity}}
\begin{corollary}
Let $(a_0,\dotsc,a_{n-1})$, $(b_0,\dotsc,b_{n-1})$, $(c_0,\dotsc,c_{n-1})$, $(d_0,\dotsc,d_{n-1})$ be the defining rows
of a set of good matrices $A$, $B$, $C$, $D$ of odd order $n$ with
$b_0=c_0=d_0=1$.
Then for $1\leq k<n/2$ we have in\/ $\Z_2$
\[ \bar{a}_k + \bar{a}_{2k} + \bar{b}_k + \bar{c}_k + \bar{d}_k = 1 \]
and when $n=3m$ we have $\bar{b}_{m}+\bar{c}_{m}+\bar{d}_{m}=1$.
\end{corollary}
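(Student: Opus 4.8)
The engine behind this corollary is that the map $x\mapsto\bar{x}$ is a group isomorphism from $(\{\pm1\},\times)$ to $(\Z_2,+)$: for $\pm1$ values it satisfies $\overline{xy}=\bar{x}+\bar{y}$ and $\overline{-1}=1$. The plan is to push the multiplicative identity of Theorem~\ref{thm:product} through this isomorphism so as to turn it into an additive identity in $\Z_2$. Since $b_0=c_0=d_0=1$, Theorem~\ref{thm:product} reads $a_kb_kc_kd_k=-a_{2k\bmod n}$; multiplying both sides by $a_{2k\bmod n}$ and using $a_{2k\bmod n}^2=1$ recasts it in the symmetric form $a_k\,a_{2k\bmod n}\,b_kc_kd_k=-1$, which is the cleanest input for the isomorphism.

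First I would apply the bar map to this symmetric form. Converting the five-fold product on the left into a $\Z_2$-sum and using $\overline{-1}=1$ gives
\[ \bar{a}_k+\bar{a}_{2k\bmod n}+\bar{b}_k+\bar{c}_k+\bar{d}_k=1 . \]
For $1\leq k<n/2$ we have $2k<n$, so $2k\bmod n=2k$, and this is precisely the first asserted identity. This step is essentially immediate once the isomorphism is set up, so the real content lies in the second identity.

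For the second identity I would specialize the first to $k=m$, which is permissible because $m<n/2$ when $n=3m$. The new ingredient is the skew-symmetry of $A$: since $2m=n-m$, the relation $a_i=-a_{n-i}$ forces $a_{2m}=-a_m$, hence $\bar{a}_{2m}=\bar{a}_m+1$ in $\Z_2$. Substituting this into the $k=m$ instance cancels the two $A$-contributions (as $\bar{a}_m+\bar{a}_m=0$) and isolates $\bar{b}_m+\bar{c}_m+\bar{d}_m$ against a single $\Z_2$ constant. The main obstacle—really the only delicate point in the whole argument—is tracking that constant correctly through the cancellation: the skew substitution contributes a stray $1$ that must be combined with the $1$ on the right-hand side, and an off-by-one in this $\Z_2$ bookkeeping is the easiest mistake to make, since it is exactly what pins down the final value of $\bar{b}_m+\bar{c}_m+\bar{d}_m$.
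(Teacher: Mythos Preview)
Your approach is exactly the paper's: push Theorem~\ref{thm:product} through the isomorphism $x\mapsto\bar x$ to obtain the first identity, then specialise to $k=m$ and invoke the skew relation $a_{2m}=-a_m$ for the second. One remark on the bookkeeping you rightly flag as delicate: carrying it out gives $\bar a_m+(\bar a_m+1)+\bar b_m+\bar c_m+\bar d_m=1$, hence $\bar b_m+\bar c_m+\bar d_m=0$ rather than the printed~$1$. The paper's proof is equally terse at this step (writing ``$\bar a_{2k}=-\bar a_k$'' and leaving the reader to finish), and a direct check in order $n=3$---say $B=C=(1,-1,-1)$, $D=(1,1,1)$---confirms the value is~$0$; this is also what the CNF clauses displayed later in the paper actually encode. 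So your argument and the paper's coincide and are both sound, modulo what appears to be a sign slip in the stated constant.
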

\begin{proof}
Applying the mapping $x\mapsto\bar{x}$ to $a_kb_kc_kd_k=-a_{2k}$
from Theorem~\ref{thm:product} gives
\[ \bar{a}_k + \bar{b}_k + \bar{c}_k + \bar{d}_k = 1 + \bar{a}_{2k} \]
which implies the first result since $-1=1$ in $\Z_2$.
When $k=n/3$ we have $\bar{a}_{2k}=-\bar{a}_{n-2k}=-\bar{a}_k$
and the second result follows.
\end{proof}

\end{document}